\definecolor{gray}{rgb}{0.5,0.5,0.5}
\newtheorem{theorem}{Theorem}
\newtheorem{lemma}[theorem]{Lemma}
\newcommand{\qed}{\hfill \ensuremath{\Box}}
\newenvironment{proof}{{\bf Proof:}}{$\qed$\par}
\begin{document}

\title{Optimizing Tile Concentrations to Minimize Errors and Time for DNA Tile
Self-Assembly Systems}
\author{
Ho-Lin Chen
\thanks{
Center for Mathematics of Information, California Institute of Technology, Pasadena, CA 91101, USA. Email: holinc@gmail.com.}
\and Ming-Yang Kao
\thanks{Department of Electrical Engineering and Computer Science,
Northwestern University, Evanston, IL 60208, USA. Email: kao@northwestern.edu. This author's work was supported in part by NSF Grant CCF-1049899.}
}
\date{}

\begin{titlepage}
\maketitle
\thispagestyle{empty}
\begin{abstract} 
  \small\baselineskip=12pt
DNA tile self-assembly has emerged as a rich and promising primitive for nano-technology. This paper studies the problems of minimizing assembly time and error rate by changing the tile concentrations because changing the tile concentrations is easy to implement in actual lab experiments. We prove that setting the concentration of tile $T_i$ proportional to the square root of $N_i$ where $N_i$ is the number of times $T_i$ appears outside the seed structure in the final assembled shape minimizes the rate of growth errors for rectilinear tile systems. We also show that the same concentrations minimize the expected assembly time for a feasible class of tile systems. Moreover, for general tile systems, given tile concentrations, we can approximate the expected assembly time with high accuracy and probability by running only a polynomial number of simulations in the size of the target shape.

\end{abstract}
\end{titlepage}

\section{Introduction}

Considerable modern research in science and engineering has aimed to control smaller and smaller systems in many fields, including computer science and material science. As the size of a system approaches the molecular scale, precise direct external control becomes prohibitively costly, if not impossible. As a result, bottom-up self-assembly has emerged as a rich and promising primitive for nano-technology. In particular, DNA has received much attention as a substrate for molecular self-assembly because its combinatorial nature enables the programming of molecular behaviors by choosing appropriate DNA sequences to encode information. In addition, lab techniques for the manipulation of DNA are already well developed. For these considerations, DNA self-assembly has been proposed for a variety of applications, e.g., as a means to perform computation~\cite{brw05:counter,w98:phd,sszw06:circuits}, construct molecular patterns~\cite{rpe04:sierexp,r06:origami,zs94:octahedron,sqj04:octahedron,ddlhgs09:3D_origami,hdw:twisted_origami}, and build nano-scale machines~\cite{ytmsn00:tweezer,ss04:walker,sp04:walker,bk07:tweezer,gbt08_walker,ds06:3dcasette}.

DNA tiles which self-assemble according to simple rules have been developed in lab~\cite{wlws98:stripe} and mathematically analyzed based on the combinatorial tile assembly model (aTAM) proposed by Rothemund and Winfree~\cite{rw00:square1}. Under this model, there is a set of square tiles with a {\it glue} on each of the four edges. Each glue has a certain affinity for itself called {\it strength}. The self-assembly process starts from a distinguished {\it seed structure}. Assembly proceeds as tiles attach to the partially assembled structure (initially, just the seed structure) one by one when the combined strength of matched glues between a tile and the partial structure is at least the {\it temperature} of the tile system. Many interesting tile systems have been designed under aTAM, including systems that build counters~\cite{acgh01:square,cgm04:optcounter} and squares~\cite{rw00:square1,ks06:temperture,d09:randomized}, perform Turing-universal computation~\cite{w98:phd}, and produce arbitrary computable shapes~\cite{ll99:nphard,sw07:shapes}. Unfortunately, in laboratory settings, several events that aTAM does not model have been frequently observed. These events are referred to as {\it errors} in the tile self-assembly process. A more realistic stochastic model called the kinetic tile assembly model (kTAM) was proposed by Winfree~\cite{w98:phd} to describe the rates of these errors. The kTAM model calculates the rates for various types of attachments and detachments of tiles based on thermodynamics. 

In order to make DNA tile self-assembly practical, there are two important factors that need to be minimized, namely, the error rate and the time of the assembly process. One approach to reducing the error rate of a tile assembly system~\cite{wb02:proof,cg04:snaked,rsy04:compact,scw08:healing_proofreading,clg07:criteria} is to convert an existing error-prone tile system to a more robust tile system that assembles into the same shape or pattern up to scaling. These error correcting techniques increase the number of tile types by a multiplicative factor and thus are hard to implement in practice. In contrast, it is easy to change the concentrations of tiles. Therefore, it is natural to consider reducing the error rate by changing the concentrations of tiles. This approach has been studied using computer simulations and lab experiments. However, no closed-form formulas or efficient algorithms for finding the optimal tile concentrations have been previously found. It is also natural to consider changing the tile concentrations in order to minimize the assembly time. Adleman et al.~\cite{acghkmr02:opt} designed an algorithm to find tile concentrations that approximate the minimum expected assembly time within an $O(\log n)$ factor. Goel et al.~\cite{cgm04:optcounter} showed that if all tiles have equal concentrations, then the expected assembly time is proportional to the longest length of a path in the assembly order of the target shape. Also, some studies employed computer simulations to characterize the trade-offs between the time and the error rate of an assembly process~\cite{wb02:proof,cg04:snaked}. 

\paragraph{Our Results}
On the problem of minimizing the error rate, we formulate the rate of growth errors in terms of tile concentrations based on the kinetic tile assembly model. Using our formulation, we show that setting the concentration of each tile $T_i$ proportional to the square root of the number of times $T_i$ appears outside the seed structure in the target shape minimizes the rate of growth errors. This result holds for all rectilinear tile systems (i.e., tile systems that have the same growth directions for all tiles fixed throughout the assembly process) as well as many other systems that have been implemented in lab~\cite{sw05:zigzagexp,bsrw09:counter_origami}. We also have simulation results showing that facet errors can significantly affect the accuracy of the optimal tile concentrations predicted by our mathematical analysis. On the problem of minimizing the assembly time, we prove that the above concentrations for minimizing the rate of growth errors also minimizes the expected assembly time for a feasible class of tile systems. Moreover, for general tile systems, given tile concentrations, we show that the average assembly time over a polynomial number of simulations in the size of the target shape can approximate the expected assembly time with high accuracy and probability. 

The remainder of this paper is organized as follows. Section~\ref{sec:definition} describes the two tile assembly models that we use. Section~\ref{sec:error} contains the theoretical results on minimizing the rate of growth errors. Section~\ref{sec:simulation} contains the simulation results on growth errors and some discussion on facet errors. Section~\ref{sec:time} contains the theoretical results on estimating and minimizing the expected assembly time. Section~\ref{sec:conclusion} concludes the paper with some open problems.

\section{Two Tile Assembly Models}
\label{sec:definition}

\paragraph{The Combinatorial Tile Assembly Model}

The combinatorial tile assembly model was proposed by Rothemund and Winfree~\cite{rw00:square1}. It extends the theoretical model of tiling by Wang \cite{w61:tiles} to include a mechanism for growth based on the physics of molecular self-assembly. Informally, a tile self-assembly system has a set of tiles, each of which is a square with glues of various types on each of the four edges. Two tiles will stick to each other if they have compatible glues. Below we present a succinct definition of this model with minor modifications for ease of explanation.

A {\it tile} is an oriented unit square with the {\it north}, {\it east}, {\it south} and {\it west} edges labeled from some alphabet $\Sigma$ of {\it glues}. For each tile $t$, the glues of its four edges are denoted as $\sigma_N (t)$, $\sigma_E (t)$, $\sigma_S (t)$, and $\sigma_W (t)$. We describe a tile $t$ as the quadruple $(\sigma_N (t), \sigma_E (t), \sigma_S (t), \sigma_W (t))$. Consider the triple $<$$T, g , \tau$$>$ where $T$ is a finite set of tiles, $\tau \in {\bf Z}_{> 0}$ is the {\em temperature}, and $g$ is the {\it glue strength} function from $\Sigma \times \Sigma$ to ${\bf  Z_{\geq 0}}$. It is assumed that for all $x,y \in \Sigma$, the inequality $x\neq y$ implies $g(x,y) = 0$ and there is a glue {\em null} $\in \Sigma$, such that $g(x,null)=g(null,x)=0$ for all $x \in \Sigma$. A {\em configuration} is a map from ${\bf Z}^2$ to $T \bigcup empty$, where $empty$ is a special symbol indicating the absence of any tile.

A {\it tile system} is a quadruple ${\bf T}=\ <$$T, s, g , \tau $$>$, where $T,g,\tau$ are as above and $s$ is a special configuration called the {\it seed structure}. Let $C$ and $D$ be two configurations. Suppose that there exist some $t \in T$ and some $(x,y)\in{\bf Z}^2$ such that $D = C$ except that at $(x,y)$, $C(x,y)=null$ and $D(x,y)=t$. Let $f_{N,C,t}(x,y)=g(\sigma_N (t),\sigma_S (C(x,y+1))$. Informally, $f_{N,C,t}(x,y)$ is the strength of the bond on the north edge of $t$ in configuration $C$. We define $f_{S,C,t}(x,y), f_{E,C,t}(x,y)$ and $f_{W,C,t}(x,y)$ similarly. Then tile $t$ is {\it attachable} to $C$ at position $(x,y)$ iff $f_{C,t}(x,y) \equiv f_{N,C,t}(x,y)+f_{S,C,t}(x,y)+f_{E,C,t}(x,y)+f_{W,C,t}(x,y) \geq \tau$. We write $C\rightarrow_{\bf T} D$ to denote the transition from $C$ to $D$ by attaching a tile to $C$ at position $(x,y)$. Informally, $C\rightarrow_{\bf T} D$ iff $D$ can be obtained from $C$ by adding a tile $t$ such that the total strength of interaction between $t$ and $C$ is at least $\tau$. A {\it terminal assembly} is a configuration $A$ such that there is no configuration $B$ for which $A\rightarrow_{\bf T} B$.

When a tile $t$ attaches to configuration $C$ at position $(x,y)$, the edges $U$ of $t$ with $f_{U,C,t}(x,y)>0$ are called the {\it input} edges; all other edges are called the {\it output} edges. A tile system is {\it rectilinear} if there is a unique terminal assembly that can be reached starting from the seed structure, each tile $t$ has the same input and output edges every time it attaches, and all tiles have the same input and output edges.

\paragraph{The Kinetic Tile Assembly Model}
\label{sec:def_thermo}

According to the combinatorial tile assembly model, a tile $t$ attaches at a position $(x,y)$ in a configuration $C$ iff the total strength $f_{C,t}(x,y)$ of matched glues is at least $\tau$, and any tiles that attached never fall off. In practice, tiles may attach with a weaker binding strength, and tiles that already attached may fall off. These events can cause the tile system to behave differently from the combinatorial tile assembly model. We treat these deviations as errors and try to minimize the probability of having these events. In this paper, we use the kinetic tile assembly model proposed by Winfree~\cite{w98:phd} to model the {\it forward} and {\it reverse} rates, which are the rates at which a tile attaches to and falls off from a specific position, respectively.
This model computes these rates as functions of thermodynamic parameters as follows: 

\begin{enumerate}
\item The concentrations of the tiles are held constant throughout the self-assembly process.
\item The only two reactions allowed are single tiles attaching to and dissociating from a configuration.
\item The forward rate for tile $T_i$ is $k_fc_i$, where $k_f$ is a constant and $c_i$ is the concentration of tile $T_i$. This notation is used throughout this paper.
\item The reverse rate for a tile $t$ attached to configuration $C$ at position $(x,y)$ to fall off is $k_re^{-bG_{se}}$, where $k_r$ and $G_{se}$ are constants and $b=f_{C,t}(x,y)$ is the total strength of matched glues between $t$ and $C$.
\end{enumerate}

Here, the parameters $k_f$ and $k_r$ give the time scale of the self-assembly. The value of $G_{se}$ is determined by the binding strength of the sticky ends of DNA tiles. We use $c_{\max}$ and $c_{\min}$ to denote the maximum and minimum concentrations allowed in the tile system. If one wants the tile system to assemble according to the combinatorial tile assembly model most of the time, then the following two conditions need to hold. First, if the total binding strength between a tile and the original configuration it just attached to is less than $\tau$, then the tile must fall off quickly, i.e.,
\[
k_r e^{-(\tau -1)G_{se}}\  \gg \ k_fc_{\max}.
\]
Second, if a tile $t$ is attachable to a position in $C$, then the forward rate at which it attaches should be greater than the reverse rate at which it falls off, i.e.,
\[
k_fc_{\min}\ >\ k_r e^{-\tau G_{se}}.
\]
In practice, since each tile may have a slightly different value for the parameter $k_f$ and the strength of each glue may vary, one often needs to set $k_f$ and $k_r$ (by changing an experiment's temperature) such that 
\[
k_fc_{\min}\ \gg \ k_r e^{-\tau G_{se}}.
\]

For the remainder of the paper, we assume 
\[
k_r e^{-(\tau -1)G_{se}}\  \gg \ k_fc_{\max}\ >\ k_fc_{\min}\ \gg \ k_r e^{-\tau G_{se}}.
\]
We also assume that our seed structure is made by some other processes (e.g., DNA origami~\cite{r06:origami}) and its tiles never fall off.

\section{Minimizing the Error Rate}
\label{sec:error}

In this section, we consider the problem of changing the concentrations of tiles to minimize the failure probability (i.e., error rate) for a rectilinear tile system. There are three types of errors in tile self-assembly. A {\it growth error} refers to an incorrect tile attaching at a position instead of the correct tile~\cite{wb02:proof}. A {\it facet error} refers to an incorrect tile attaching at a position where no tile is supposed to attach~\cite{cg04:snaked}. A {\it nucleation error} refers to single tiles attaching to each other to form a lattice without the seed structure~\cite{sw04:zigzagtheory}. In this section, we only consider minimizing growth errors. It has been shown~\cite{wb02:proof} that if one slows down the growth of the boundary tiles of the target shape, the growth front will be roughly triangular. In this situation, growth errors are the dominant type of errors among the above three types.

We want to compute the probability that a tile $T_j$ causes a growth error by attaching at a position $(x,y)$ where only $T_i$ can attach with total binding strength at least $\tau$. First, $T_i$ can attach at that position at rate $k_fc_i$. Once $T_i$ is attached, the probably of it falling off is negligible since $k_fc_{\min} >> k_r e^{-\tau G_{se}}$. Second, $T_j$ can attach at that position at rate $k_fc_j$. When $T_j$ attaches, it can fall off at rate $k_r e^{-(\tau-m)G_{se}}$, where $m$ is the total strength of the mismatched glues between $T_i$ and $T_j$ on their input edges. $T_j$ can also get locked in place and cause an error due to the attachment of one or more adjacent tiles. The rate $r$ at which $T_j$ gets locked in place may vary with the features in the partially assembled shape near position $(x,y)$ such as long facets. In this paper, we assume that $r$ is the same for all positions $(x,y)$ and tiles $t$. The allowable reactions related to $T_i$ and $T_j$ are summarized in Figure~\ref{fig:markov_chain}. From the above description of reaction rates, we know that at a given position $(x,y)$ where tile $T_i$ is supposed to attach, the probability of having a growth error caused by $T_j$ is $\frac{c_j}{c_i}\epsilon_{ij}$. Here, $\epsilon_{ij} = \frac{r}{r+k_r e^{-(\tau-m)G_{se}}}$, which is  roughly at the order of $e^{-mG_{se}}$ since $r \leq 2k_fc_{\max}$. Therefore, at position $(x,y)$, the total probability of a growth error is 
\[
\frac{\sum_{j\neq i}\epsilon_{ij}c_j}{c_i}.
\]

\begin{figure*}[htbp]
    \centering
    {\includegraphics[scale=.30]{./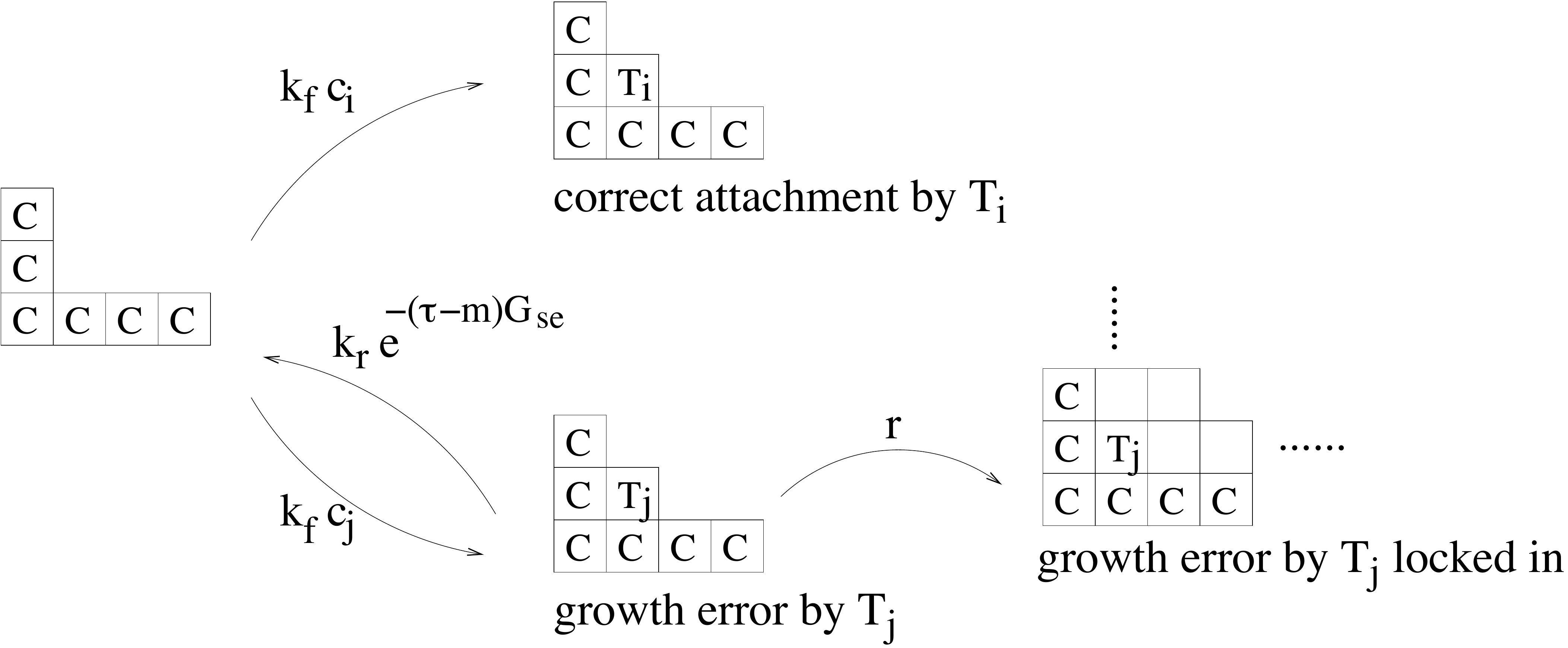}}
  \caption{\label{fig:markov_chain}
    \small{A Markov chain describing attachments of $T_i$ and $T_j$, where $C$ indicates a correct tile.}} 
\end{figure*}



For a self-assembly process, the error rates at different positions depend on each other. However, if one wants to have a high probability of success, one almost always needs to set the experimental condition such that the error rate at each position is much smaller than $1/n$. In this case, minimizing the sum of error rates over all positions is a good approximation of minimizing the actual overall error rate of the tile assembly system. Thus, in the remainder of this section, we will minimize 
\begin{equation}
\label{eqn:total_error}
\sum_i N_i \big(\frac{\sum_{j\neq i}\epsilon_{ij}c_j}{c_i} \big),
\end{equation}
where $N_i$ is the number of positions outside the seed structure to which $T_i$ is supposed to attach. 








\begin{theorem}
\label{thm:error}
For a rectilinear tile system with a unique terminal assembly, the error rate (i.e., probability of failure) is minimized when the concentration of each tile $T_i$ is proportional to $\sqrt{N_i}$, where $N_i$ is the number of times tile $T_i$ appears outside the seed structure in the correct terminal assembly of the tile system.
\end{theorem}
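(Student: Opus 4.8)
The plan is to treat the quantity in~(\ref{eqn:total_error}) as a function of the concentration vector $(c_1,\dots,c_n)$ and show directly that it attains its global minimum exactly at $c_i \propto \sqrt{N_i}$, using a pairing argument built on the AM-GM inequality rather than brute-force calculus. The first observation I would record is that the objective is invariant under uniform scaling: replacing every $c_i$ by $\lambda c_i$ leaves each ratio $c_j/c_i$ unchanged, so the problem depends only on the ratios among the concentrations. This lets us drop any normalization constraint entirely and work with the single standing requirement $c_i>0$, so the phrase ``proportional to $\sqrt{N_i}$'' is exactly the right form of answer to aim for.

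Next I would rewrite the objective as $\sum_{i\neq j} N_i\epsilon_{ij}\,c_j/c_i$ and regroup it over \emph{unordered} pairs, so that each pair $\{i,j\}$ contributes
\[
N_i\epsilon_{ij}\frac{c_j}{c_i} + N_j\epsilon_{ji}\frac{c_i}{c_j}.
\]
The crucial structural fact to establish here is the symmetry $\epsilon_{ij}=\epsilon_{ji}$. This follows from the rectilinear hypothesis: all tiles share the same input edges, so the mismatch strength $m$ between $T_i$ and $T_j$ on those input edges — which is all that enters $\epsilon_{ij}$ — is manifestly symmetric in $i$ and $j$. Having factored out this common value, AM-GM yields
\[
N_i\frac{c_j}{c_i} + N_j\frac{c_i}{c_j}\ \geq\ 2\sqrt{N_iN_j},
\]
with equality precisely when $N_i c_j^2 = N_j c_i^2$, that is, when $c_i/c_j=\sqrt{N_i/N_j}$. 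Note that this minimizing ratio is independent of the value of $\epsilon_{ij}$, which is what allows one universal rule to govern all pairs at once.

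The heart of the argument, which I would stress, is that the single assignment $c_i=\lambda\sqrt{N_i}$ simultaneously satisfies the equality condition $c_i/c_j=\sqrt{N_i/N_j}$ for every pair $\{i,j\}$. Since this one choice drives each pairwise term down to its own lower bound $2\epsilon_{ij}\sqrt{N_iN_j}$, it must minimize the full sum, which therefore attains the value $\sum_{i\neq j}\epsilon_{ij}\sqrt{N_iN_j}$ and no less. This is precisely the claimed optimum, so the proof reduces to the pairing-plus-AM-GM calculation once the symmetry is in place.

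I expect the main obstacle to be not the inequality itself but pinning down the symmetry $\epsilon_{ij}=\epsilon_{ji}$ cleanly, because the entire closed form hinges on it: were the $\epsilon$'s asymmetric, the optimal ratio would shift to $\sqrt{N_i\epsilon_{ij}/(N_j\epsilon_{ji})}$ and no uniform $\sqrt{N_i}$ rule could hold. Two secondary points I would address are that the claimed proportionality is realizable within the physical window $[c_{\min},c_{\max}]$ whenever these bounds accommodate the spread $\sqrt{\max_i N_i/\min_i N_i}$, and — as a sanity check that the pairwise optimum is genuinely global rather than merely a critical point — that the substitution $c_i=e^{x_i}$ turns each term $c_j/c_i$ into $e^{x_j-x_i}$, rendering the objective convex in the $x_i$ and confirming the minimizer is unique up to scaling.
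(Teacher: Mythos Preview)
Your proof is correct and takes a genuinely different route from the paper's. The paper normalizes to $\sum_i c_i=1$ and applies Lagrange multipliers: it differentiates $\Lambda=\sum_i N_i\bigl(\sum_{j\neq i}\epsilon_{ij}c_j\bigr)/c_i+\lambda(\sum_i c_i-1)$ and observes that the stationary equations collapse when $c_i\propto\sqrt{N_i}$. You instead exploit the degree-zero homogeneity to discard the constraint, pair the $(i,j)$ and $(j,i)$ terms, and apply AM--GM to each pair. What your approach buys is a direct proof of \emph{global} optimality---the paper's first-order calculation locates only a critical point and never verifies second-order conditions---and it makes the dependence on the symmetry $\epsilon_{ij}=\epsilon_{ji}$ explicit and load-bearing, whereas the paper uses it silently inside the derivative computation and only remarks on it after the proof. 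The Lagrange route, for its part, is more mechanical and would still yield the stationary system even without symmetry, which can be useful when probing variants. Your convexity check via $c_i=e^{x_i}$ is a clean addendum that the paper omits entirely.
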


\begin{proof}
From Equation~\ref{eqn:total_error}, we can scale the tile concentrations $c_i$ without loss of generality such that $\sum_i c_i = 1$, and we need to solve the following minimization problem:

\[
\mbox{Minimize} \sum_i N_i \big(\frac{\sum_{j\neq i}\epsilon_{ij}c_j}{c_i} \big),
\]
\[
\mbox{subject to} \sum_i c_i\ =\ 1.
\]

The Lagrange multiplier for this minimization problem is
\[
\Lambda\ =\ \sum_i N_i \big(\frac{\sum_{j\neq i}\epsilon_{ij}c_j}{c_i} \big)\ +\ \lambda (\sum_i c_i - 1).
\]

We need to solve 
\begin{equation}
\label{eqn:derivative}
\frac{\partial \Lambda}{\partial c_i}\ =\ -N_i \big(\frac{\sum_{j\neq i}\epsilon_{ij}c_j}{c_i^2} \big)\ +\ \sum_{j\neq i}N_j \frac{\epsilon_{ij}}{c_j}\ +\ \lambda\ =\ 0 \ \ \ \mbox{for all $i$}
\end{equation}

and 
\[
\sum_i c_i\ =\ 1.
\]

Simplifying Equation~\ref{eqn:derivative}, we obtain 
\[
\sum_{j\neq i} \big(-N_i\frac{\epsilon_{ij}c_j}{c_i^2}\ + N_j\frac{\epsilon_{ij}}{c_j} \big)\ +\ \lambda\ =\ 0 \ \ \ \mbox{for all $i$,}
\]

and consequently the error rate is minimized when
\[
c_i = \frac{\sqrt{N_i}}{\sum_i \sqrt{N_i}}.
\]
\end{proof}

Two points about the proof of Theorem~\ref{thm:error} are worth noticing. Firs, the error rate only depends on the ratio between the tile concentrations. Therefore, the error rate is minimized when the concentration of $T_i$ is proportional to $\sqrt{N_i}$ even when we vary each $c_i$ between $c_{\max}$ and $c_{\min}$. Second, the same proof can apply to all tile systems that satisfy $\epsilon_{ij} = \epsilon_{ji}$ for all $i$, $j$. Hence Theorem~\ref{thm:error} is valid for many other systems already implemented in lab such as zig-zag ribbons~\cite{sw05:zigzagexp} and counters seeded by origami~\cite{bsrw09:counter_origami}. Moreover, Theorem~\ref{thm:error} still applies if we add either a uniform proofreading scheme~\cite{wb02:proof} or a snaked proofreading scheme~\cite{cg04:snaked} to the tile system. However, for a proofreading system to work, we need $k_f \cdot c_i$ to be sufficiently close to $k_r e^{-\tau G_{se}}$ for all tiles $T_i$, which is false in the parameter range that we use in this paper.

\section{Simulation Results for Theorem~\ref{thm:error}}
\label{sec:simulation}

We used a software called xgrow~\cite{xgrow} to simulate four tile systems to determine their error rates under different tile concentrations. To obtain a good estimate of the error rate of a tile system, we would choose our parameters such that errors can be frequently observed. However, in most tile systems, if we use such parameters, we will reach some configuration very different from the terminal assembly predicted by the combinatorial tile assembly model. Since our prediction of the optimal tile concentrations depends on the terminal assembly, we made a design decision to perform simulations on tile systems for which each error only affects one position of the terminal assembly.

We simulated four tile systems named as $A_1$, $A_2$, $B_1$, and $B_2$. Each of the four systems operates at $\tau = 2$ and only has two tiles $X$ and $Y$ shown in Figure~\ref{fig:system_diagram}(a) beside the seed structure. The only difference between the four systems is their seed structures. The seed structures of tile systems $A_1$ and $A_2$ are shown in Figure~\ref{fig:system_diagram}(b). The lengths of their seed structures are adjusted such that $N_X : N_Y = 25:1\ \mbox{and}\ 64:1$ for $A_1$ and $A_2$, respectively, where $N_X$ and $N_Y$ are the numbers of positions $X$ and $Y$ appear in the terminal assembly. The seed structures of tile systems $B_1$ and $B_2$ are shown in Figure~\ref{fig:system_diagram}(c). The lengths of their seed structures are also adjusted such that $N_X : N_Y = 25:1\ \mbox{and}\ 64:1$ for $B_1$ and $B_2$, respectively. These systems are rectilinear with all tiles having their input edges on the south and east edges. Since tiles $X$ and $Y$ have the same output edges, when an error happens, the error only affects the position where the erroneous tile is attached. The unique terminal assemblies and example configurations generated by simulations of tile systems $A_1$ and $B_1$ are shown in Figures~\ref{fig:systemA}~and~\ref{fig:systemB}, respectively. Theorem~\ref{thm:error} predicts that the rate of growth errors is minimized when $c_X : c_Y = 5:1$ for systems $A_1$ and $B_1$, and when $c_X : c_Y = 8:1$ for systems $A_2$ and $B_2$.

\begin{figure*}[htbp]
    \centering
    {\includegraphics[scale=.4]{./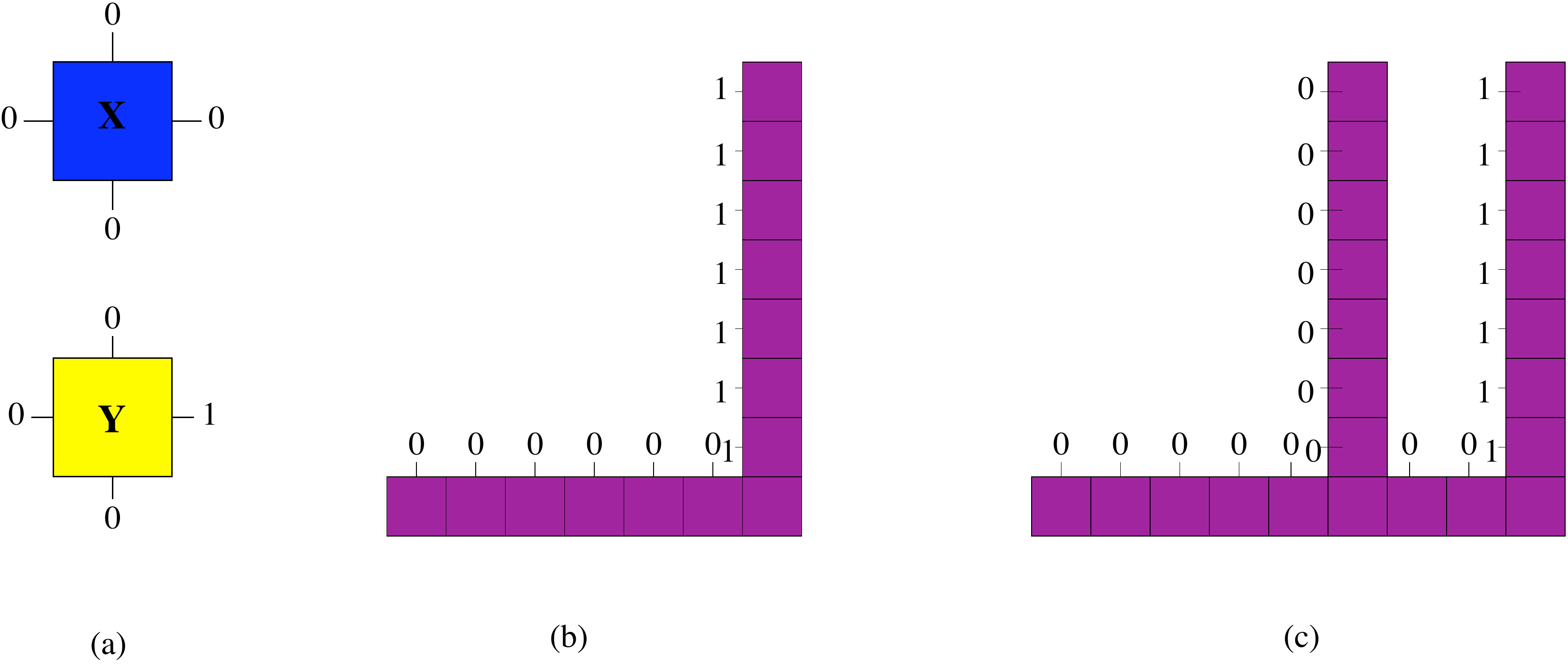}}
  \caption{\label{fig:system_diagram}
    \small{(a) Tiles $X$ and $Y$, where all glues have strength $1$. (b) An L-shaped seed structure for systems $A_1$ and $A_2$. (c) A seed structure with two vertical facets and one horizontal facet for systems $B_1$ and $B_2$.}} 
\end{figure*}

\begin{figure*}[htbp]
    \centering        
    {\includegraphics[scale=.30]{./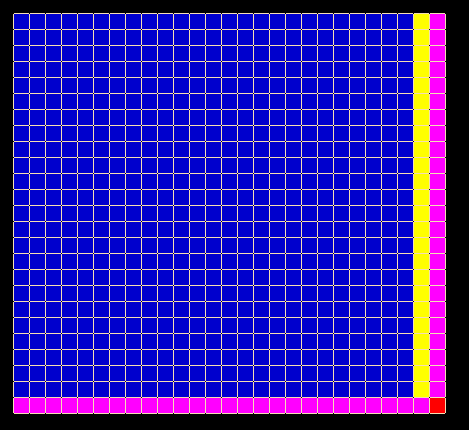}}        
    \hspace{2cm}
    {\includegraphics[scale=.30]{./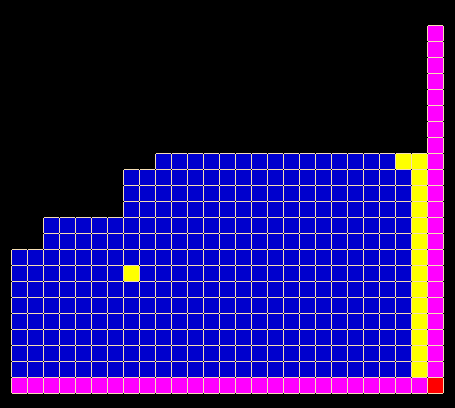}}
  \caption{\label{fig:systemA}
    \small{The left figure is the unique terminal assembly for system $A_1$. The right figure is an example configuration for system $A_1$ generated by an xgrow simulation.}} 
\end{figure*}

\begin{figure*}[htbp]
    \centering        
    {\includegraphics[scale=.30]{./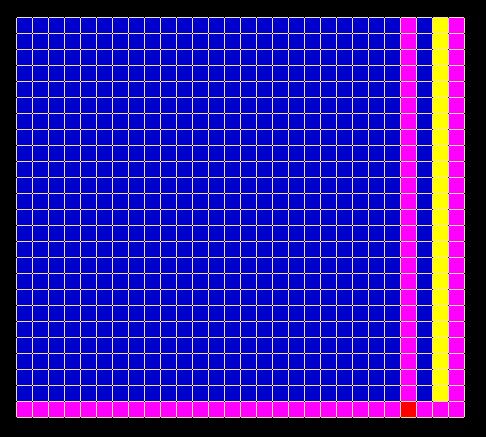}}
    \hspace{2cm}
    {\includegraphics[scale=.30]{./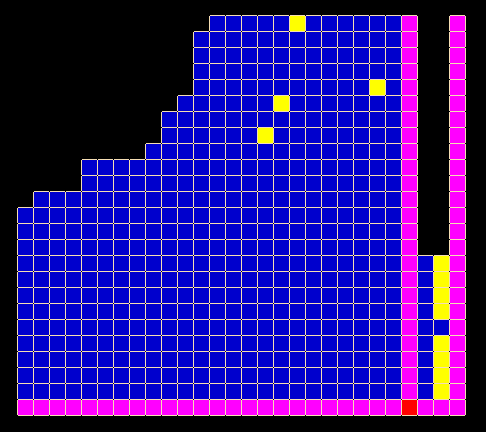}}
  \caption{\label{fig:systemB}
    \small{The left figure is the unique terminal assembly for system $B_1$. The right figure is an example configuration for system $B_1$ generated by an xgrow simulation.}} 
\end{figure*}

\begin{figure*}[htbp]
   \centering
    {\includegraphics[scale=.50]{./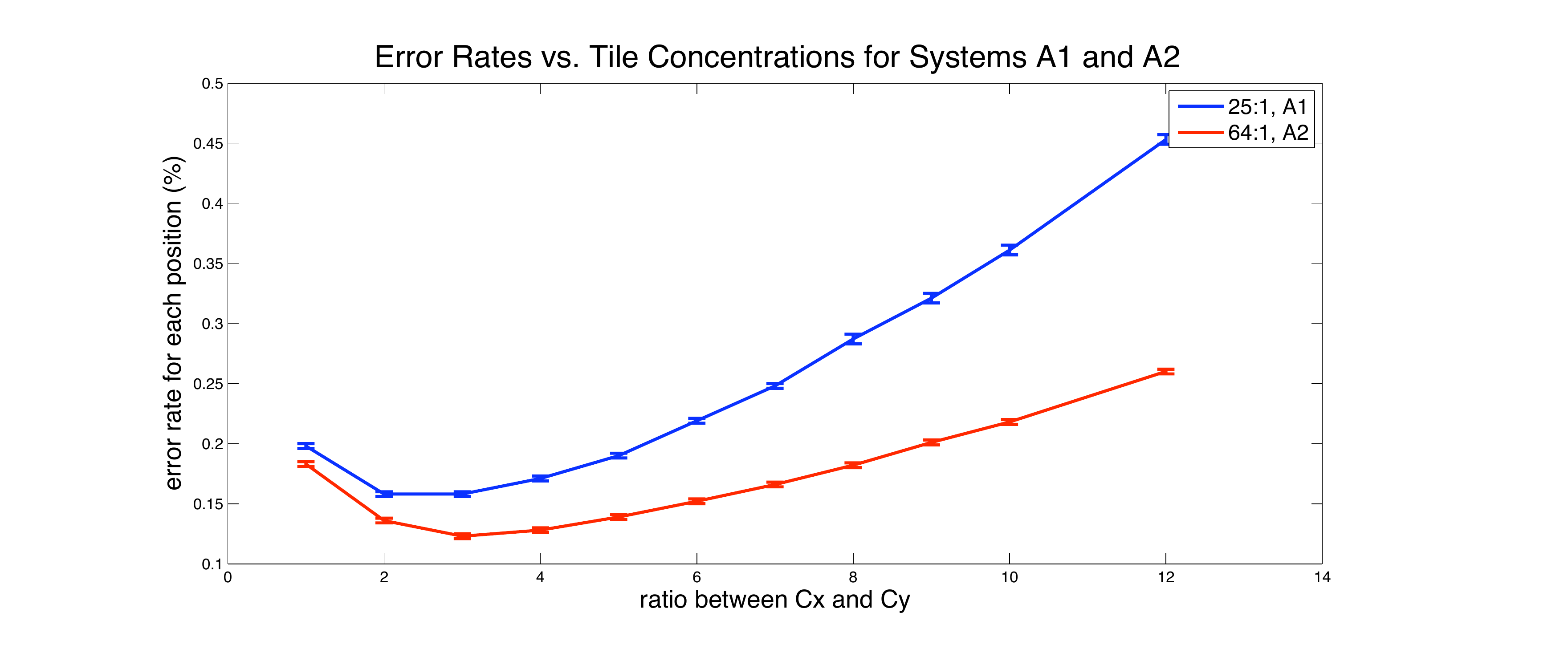}}
    {\includegraphics[scale=.50]{./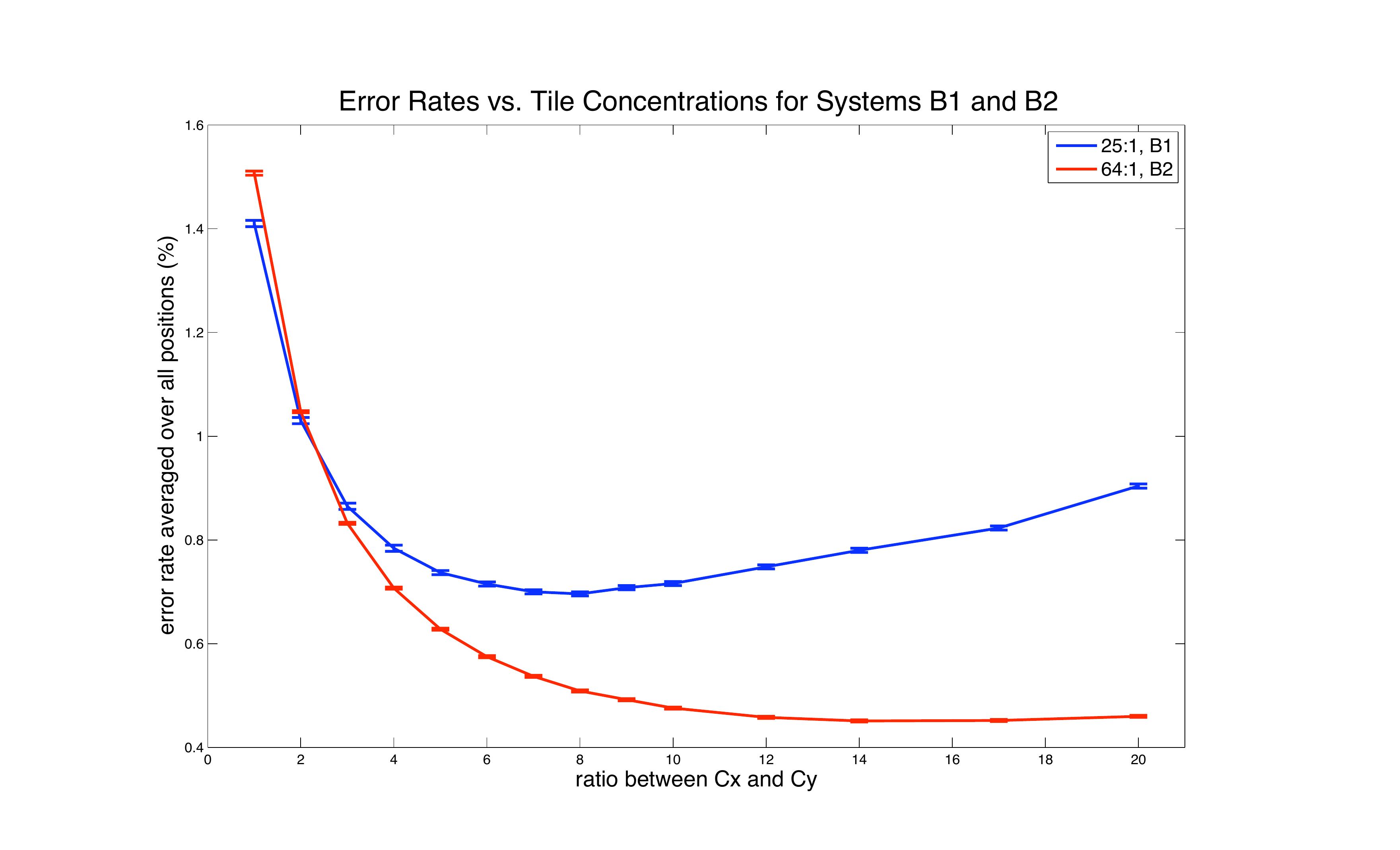}}
  \caption{\label{fig:error_rates}
    \small{Plots of error rates vs.~the ratios between tile concentrations. Each data point represents $m = 20,000$ simulations. The simulations use $G_{se}=9$ for systems $A_1$ and $A_2$, $G_{se}=11$ for systems $B_1$ and $B_2$, $c_X + c_Y = e^{-16}$, and $k_f = k_r$. Error bars show two standard deviations of the errors, computed using $\sigma = \sigma_{\mbox{simulation}}/\sqrt{m}$.}} 
\end{figure*}

The simulation results are shown in Figure~\ref{fig:error_rates}. In systems $A_1$ and $A_2$, the optimal tile concentration ratios are $2.5:1$ and $3:1$, respectively. In systems $B_1$ and $B_2$, the optimal tile concentration ratios are roughly $7.5:1$ and $15:1$, respectively. The major reason causing these deviations from the predictions made by Theorem~\ref{thm:error} is the facet errors. Since tiles $X$ and $Y$ both have glue $0$ on the south edge, having a long horizontal facet may introduce a large number of facet errors. For systems $A_1$ and $A_2$, notice that long horizontal facets are generated because tile $Y$ (colored yellow) has lower concentrations and grows slower than $X$. An example configuration for system $A_1$ that demonstrates these horizontal facets is shown in Figure~\ref{fig:systemA}. Such undesirable facets become longer and more when we increase the ratio between $c_X$ and $c_Y$. Therefore, the actual optimal tile concentration is biased towards having more $Y$ than predicted by Theorem~\ref{thm:error}. For systems $B_1$ and $B_2$, each terminal assembly is separated into a left portion and a right portion by the seed structure, as shown in Figure~\ref{fig:systemB}. Horizontal facets can only be generated in the left portion, where all tiles should be $X$. Hence, we can reduce facet errors by decreasing the concentration of tile $Y$ and thus the actual optimal tile concentration is biased towards having fewer $Y$ than predicted by Theorem~\ref{thm:error}.

\section{Minimizing the Expected Assembly Time}
\label{sec:time}

This section assumes that only the correct tiles can attach and any tile that already attached never falls off. We minimize the expected assembly time by varying the tile concentrations.

\begin{theorem}
\label{thm:time}
Consider any tile system with the four properties that 
%
%
%
%
%
\begin{enumerate}
\item tiles can only attach one by one,
\label{property1}
\item only the correct tiles can attach,
\label{property2}
\item any tile that already attached never falls off, and 
\label{property3}
\item there is a unique terminal assembly.
\label{property4}
\end{enumerate}
Assume that the total tile concentration is $\sum_i c_i = 1$. Setting $c_i = \frac{\sqrt{N_i}}{\sum_i \sqrt{N_i}}$ minimizes the expected assembly time of the tile system.
\end{theorem}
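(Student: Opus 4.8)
The plan is to recast the assembly as a continuous-time Markov process and then reduce the optimization to an elementary convex problem. Under properties~\ref{property1}--\ref{property4}, the set of occupied positions grows monotonically from the seed to the unique terminal assembly, which contains $n=\sum_i N_i$ tiles. The relation ``position $q$ cannot be filled until the neighbors supplying its input glues are filled'' imposes a partial order (a DAG) on the positions, and by the memorylessness of the kTAM forward rate, once a position at which $T_i$ belongs becomes attachable it is filled after an $\mathrm{Exp}(k_f c_i)$ waiting time. Hence the total assembly time is the completion time of a PERT-type network: the maximum, over all maximal chains $P$ of the DAG, of the sum of the per-position waiting times along $P$. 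I want to minimize the expectation of this quantity over all concentration vectors with $\sum_i c_i = 1$; for equal concentrations this recovers the longest-path characterization of Goel \etal{} quoted in the introduction.

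First I would isolate the core optimization on a single chain. Along a chain $P$ the waiting times are independent, so its expected duration is $\frac{1}{k_f}\sum_{p\in P}\frac{1}{c_{i(p)}}$; for a chain that traverses the whole shape (so that $T_i$ is used $N_i$ times) this equals $\frac{1}{k_f}\sum_i \frac{N_i}{c_i}$. Minimizing $\sum_i \frac{N_i}{c_i}$ subject to $\sum_i c_i = 1$ is exactly the kind of Lagrange computation used for Theorem~\ref{thm:error}; equivalently, Cauchy--Schwarz gives
\[
\Big(\sum_i \frac{N_i}{c_i}\Big)\Big(\sum_i c_i\Big)\ \geq\ \Big(\sum_i \sqrt{N_i}\Big)^2,
\]
with equality if and only if $c_i \propto \sqrt{N_i}$. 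This already pins down the claimed minimizer $c_i = \frac{\sqrt{N_i}}{\sum_i \sqrt{N_i}}$ for the path-like case.

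To pass from a single chain to the whole assembly I would prove a lower bound valid for every $c$ and then show $\sqrt{N_i}$ attains it for the feasible class. Writing the waiting time of position $p$ as $\frac{E_p}{k_f c_{i(p)}}$ with shared rate-$1$ exponentials $E_p$, the completion time is $\frac{1}{k_f}\max_P \sum_{p\in P}\frac{E_p}{c_{i(p)}}$, so
\[
E[\text{time}]\ \geq\ \frac{1}{k_f}\max_P \sum_{p\in P}\frac{1}{c_{i(p)}}
\]
by $E[\max]\geq\max E$. For a feasible system every maximal chain has tile composition proportional to $(N_i)$, so each chain's expected duration is minimized by the same $c^{*}\propto\sqrt{N_i}$; combined with a monotonicity/stochastic-dominance argument on the coupled representation, this $c^{*}$ simultaneously shrinks every chain and hence minimizes the expected maximum, matching the lower bound.

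The main obstacle is precisely this last reduction: in general $E[\max_P(\cdot)]$ is governed by the joint law of correlated chain durations, and minimizing each marginal expectation need not minimize the expected maximum. The real work is therefore to exploit the definition of the feasible class --- that the dependency DAG is balanced enough that its critical behavior is captured by chains of composition proportional to $(N_i)$ --- so that the per-chain optimum $\sqrt{N_i}$ is also the global optimum; the concentration formula itself drops out of the routine Cauchy--Schwarz step.
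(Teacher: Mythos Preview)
You have misread property~\ref{property1}. In this paper, ``tiles can only attach one by one'' is not the trivially true statement that in continuous time no two attachments occur simultaneously; it is the structural hypothesis that the process is \emph{sequential}, i.e.\ at each moment at most one position is attachable, so the dependency DAG you build is in fact a single chain containing all $n=\sum_i N_i$ positions. This reading is confirmed by Lemma~\ref{lem:example}, which explicitly says ``property~\ref{property1} is replaced by allowing tiles to attach in parallel whenever they can'' and then shows that in the parallel regime the optimal concentrations are \emph{not} proportional to $\sqrt{N_i}$. Under the intended reading the expected assembly time is simply the sum of the individual expected waiting times, $\frac{1}{k_f}\sum_i \frac{N_i}{c_i}$, and your Cauchy--Schwarz step (equivalently the paper's Lagrange computation) finishes the proof immediately. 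Everything in your proposal up through the displayed Cauchy--Schwarz inequality is then already the complete argument.

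The PERT/max-over-chains machinery you set up is aimed at a strictly harder theorem than the one stated, and the ``main obstacle'' you flag is not a technicality but a genuine obstruction: as Lemma~\ref{lem:example} demonstrates, once parallel attachment is allowed the minimizer need not be $c_i\propto\sqrt{N_i}$, so no argument that ``the per-chain optimum is also the global optimum'' can go through in that generality. Your appeal to ``the definition of the feasible class'' and to chains whose tile composition is ``proportional to $(N_i)$'' is not supported by anything in the hypotheses; there is no such balancing assumption in the paper. In short, the gap you identify is real, and the paper avoids it not by a clever reduction but by the hypothesis itself.
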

\begin{proof}
From the kinetic tile assembly model~\cite{w98:phd}, the expected time for a tile $T_i$ to attach to a location $(x,y)$ is $\frac{1}{k_fc_i}$. By the four properties of the given tile system in the theorem, the expected assembly time is $\frac{1}{k_f}\sum_i \frac{N_i}{c_i}$, where $N_i$ is the number of times $T_i$ appears outside the seed structure in the terminal assembly. We want to minimize the sum $\sum_i \frac{N_i}{c_i}$ subject to $\sum_i c_i =1$. 
The Lagrange multiplier for this minimization problem is
\[
\Lambda\ =\ \sum_i \frac{N_i}{c_i}\ +\ \lambda (\sum_i c_i - 1).
\]
Solving the equations
\[
\frac{\partial \Lambda}{\partial c_i}\ =\ -\frac{N_i}{c_i^2}\ +\ \lambda\ =\ 0  \ \ \ \mbox{for all $i$}
\]
and
\[
\sum_i c_i\ =\ 1,
\]
we obtain
\[
c_i = \frac{\sqrt{N_i}}{\sum_i \sqrt{N_i}}.
\]
\end{proof}

In settings that are more general than Theorem~\ref{thm:time} assumes, the optimal tile concentrations to minimize the expected assembly time may significantly deviate from the $c_i$'s determined in Theorem~\ref{thm:error}. Lemma~\ref{lem:example} describes an example of such deviation.

\begin{lemma}
\label{lem:example}
Consider a tile system for which properties~\ref{property2}~through~\ref{property4} in Theorem~\ref{thm:time} hold but property~\ref{property1} is replaced by allowing tiles to attach in parallel whenever they can. Also, as in Theorem~\ref{thm:time}, assume that the total tile concentration is $\sum_i c_i = 1$. Further assume that in the final assembly, there are $N$ tiles not in the seed structure, and all $N$ tiles can attach to the seed structure directly and in parallel. Then the optimal tile concentrations that minimize the expected assembly time must satisfy $c_i \geq \frac{1}{k\ln N}$, where $k$ is the number of tile types.
\end{lemma}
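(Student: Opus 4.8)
The plan is to reduce the lemma to a calculation about the maximum of independent exponential random variables. Under properties~\ref{property2}--\ref{property4}, the parallel-attachment assumption, and the hypothesis that all $N$ non-seed tiles attach to the seed directly, the $N$ positions fill \emph{independently}: a position whose correct tile is $T_i$ receives that tile after a waiting time distributed as $\mathrm{Exp}(k_fc_i)$, and these waiting times are mutually independent since no position depends on another being filled first. Hence the assembly time is $T=\max_p T_p$, the maximum of these $N$ independent exponentials, and the quantity to minimize over the simplex $\{c:\ \sum_i c_i=1,\ c_i\ge 0\}$ is $E[T]$.

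First I would prove a lower bound showing that a small concentration forces a large expected time. For any tile type $i$ that appears in the assembly (so $N_i\ge 1$), the overall maximum dominates the waiting time of a single position of type $i$, giving $E[T]\ge \frac{1}{k_fc_i}$; writing $c_\ast=\min_i c_i$ over the appearing types, this yields $E[T]\ge \frac{1}{k_f c_\ast}$. (A sharper variant compares against the maximum of the $N_i$ i.i.d.\ exponentials of type $i$, whose expectation is $\frac{H_{N_i}}{k_fc_i}$, but the crude bound already suffices.)

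Second I would upper-bound the optimal expected time by evaluating $E[T]$ at the uniform assignment $c_i=1/k$, which is feasible. Then every position fills at the common rate $k_f/k$, so $T$ is the maximum of $N$ i.i.d.\ $\mathrm{Exp}(k_f/k)$ variables and $E[T_{\mathrm{unif}}]=\frac{k}{k_f}H_N$, where $H_N=\sum_{j=1}^{N}1/j$ is the $N$-th harmonic number; the same estimate follows from the union bound $P(T>t)\le\sum_i N_i e^{-k_fc_it}$, which gives $E[T_{\mathrm{unif}}]\le\int_0^\infty \min(1,\,Ne^{-k_ft/k})\,dt=\frac{k}{k_f}(\ln N+1)$. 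Since the optimal concentrations minimize $E[T]$, we get $E[T_{\mathrm{opt}}]\le E[T_{\mathrm{unif}}]=O\!\left(\frac{k\ln N}{k_f}\right)$.

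Combining the two estimates at the optimum gives $\frac{1}{k_f c_\ast}\le E[T_{\mathrm{opt}}]\le \frac{k}{k_f}H_N$, hence $c_\ast\ge \frac{1}{kH_N}=\Omega\!\left(\frac{1}{k\ln N}\right)$, which is the claimed bound once we use the standard estimate $H_N=\ln N+O(1)$. I expect the main obstacle to be the analysis of the maximum of exponentials: the lower bound must avoid assuming the waiting times are identically distributed (hence the per-type argument), while the upper bound needs the harmonic-number / union-bound estimate for the expected maximum. The gap between $H_N$ and $\ln N$ is precisely where the stated constant is read up to the approximation $H_N\approx\ln N$; all ordering and geometric subtleties disappear because every non-seed tile attaches to the seed directly, so the positions are genuinely independent.
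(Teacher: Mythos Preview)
Your proposal is correct and follows the same two-step argument as the paper: upper-bound the optimum by evaluating at the uniform assignment $c_i=1/k$, then lower-bound the expected assembly time at any concentration vector by the waiting time of a single position whose tile has the smallest concentration, and compare. You are in fact more careful than the paper, tracking the $k_f$ factor and distinguishing $H_N$ from $\ln N$; as you note, the rigorous conclusion is $c_\ast\ge 1/(kH_N)$, and the paper's stated bound $1/(k\ln N)$ is exactly this read through the approximation $H_N\approx\ln N$.
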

\begin{proof}
A key point of this proof is that the attachment of tiles to all $N$ positions outside the seed structure in the terminal assembly can happen in parallel. If all tiles have the same concentration $1/k$, then the expected assembly time is $k\ln N$. If any tile has concentration lower than $\frac{1}{k\ln N}$, then the attachment of that single tile non-optimally takes expected time more than $k\ln N$. Thus, when the expected assembly time is minimized, all tile concentrations must be at least $\frac{1}{k\ln N}$.
\end{proof}

For general tile assembly systems, we do not know how to analytically find the optimal tile concentrations to minimize the expected assembly time. However, we show in Theorem~\ref{thm:simulation} below that given a set of tile concentrations, only a polynomial number of simulations is required in order to approximate the expected assembly time with high accuracy and probability.

\begin{lemma}
\label{lem:simulation}
Consider any tile system for which properties~\ref{property2}~through~\ref{property4} in theorem~\ref{thm:time} hold but there is no assumption on whether tiles can attach only one by one or in parallel. If the assembly process of the tile system takes expected time $S$, then for any $\epsilon > 0$, the average of the assembly times over $48S^2\frac{1}{\epsilon}$ simulations of the assembly process will be between $S-\epsilon$ and $S+\epsilon$ with probability at least $3/4$.
\end{lemma}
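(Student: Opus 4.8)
The plan is to reduce the statement to a second-moment (Chebyshev) concentration argument for the empirical average of independent simulation runs, so that the real work becomes bounding the variance of a single run's assembly time by a constant multiple of $S^2 = (E[T])^2$. Let $T$ denote the assembly time of one run. If I can establish $\mathrm{Var}(T) \le c\,S^2$ for an explicit absolute constant $c$, then for $m$ independent runs the sample mean $\bar T_m$ satisfies, by Chebyshev's inequality, $\Pr(|\bar T_m - S| \ge \epsilon) \le \mathrm{Var}(T)/(m\epsilon^2) \le c\,S^2/(m\epsilon^2)$, and choosing $m$ on the order of $S^2/\epsilon^2$ — the number of simulations in the statement — drives this below $1/4$. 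Everything therefore hinges on the variance bound.

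To obtain that bound I would first fix the stochastic model of a single run. Under properties 2 through 4 (only correct tiles attach, nothing detaches, and there is a unique terminal assembly), the run is \emph{monotone}: positions outside the seed get filled over time and are never emptied, and by the memorylessness of the kinetic model each position, once its required neighbors are present, is filled after an independent exponential waiting time. Thus $T$ is a longest-path-type functional of independent exponentials with mean $S$, and I want to control its spread around $S$.

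The key step is a geometric tail bound obtained by a restart/domination argument. By Markov's inequality, $\Pr(T > 2S) \le 1/2$. Next, for any partial assembly $\sigma$ reachable during a run, let $T_\sigma$ be the remaining completion time started from $\sigma$; a coupling that feeds the two processes the same exponential clocks shows $T_\sigma \le T$ almost surely, since the run started from the larger configuration $\sigma$ is always "ahead" in the assembly order, so every position completes no later. Hence $T_\sigma$ is stochastically dominated by $T$ and $\Pr(T_\sigma > 2S) \le 1/2$. Combined with memorylessness — conditioned on the exact state $\sigma$ at time $2kS$, the residual assembly time is distributed as $T_\sigma$ — this gives $\Pr(T > 2(k{+}1)S \mid T > 2kS) \le 1/2$, and therefore $\Pr(T > 2kS) \le 2^{-k}$ for every integer $k \ge 0$.

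Finally I would integrate this tail: writing $E[T^2] = \int_0^\infty 2t\,\Pr(T>t)\,dt$, splitting the range into the blocks $[2kS,\,2(k{+}1)S)$, and using $\Pr(T>t) \le 2^{-k}$ on the $k$-th block yields $E[T^2] \le c\,S^2$ for an absolute constant, hence $\mathrm{Var}(T) \le c'\,S^2$; plugging this into the Chebyshev bound above gives the claimed sample count. The step I expect to be the main obstacle is the stochastic-domination claim $T_\sigma \preceq T$ together with the clean use of memorylessness to justify the restart at times $2kS$ uniformly over all reachable states $\sigma$; once that geometric tail is secured, the Markov base case, the moment integral, and the Chebyshev bound are routine.
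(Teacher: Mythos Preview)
Your approach is essentially the same as the paper's: Markov's inequality gives $\Pr(T>2S)\le 1/2$, a restart/domination argument (the paper phrases it simply as ``if not finished at time $2S$, the remaining expected time is at most $S$'') yields the geometric tail $\Pr(T>2kS)\le 2^{-k}$, and summing/integrating this tail bounds the variance by a constant times $S^2$, after which Chebyshev finishes. One small note: Chebyshev with $\mathrm{Var}(T)\le cS^2$ produces a sample count of order $S^2/\epsilon^2$, not the $48S^2/\epsilon$ appearing in the statement; the paper's own proof has the same mismatch, so your derivation is at least as tight.
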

\begin{proof}
We only need to bound the variance of the assembly time of a simulation to show that the average of the assembly times of simulations converges fast. Suppose the expected time for the final shape to assemble is $S$. From Markov's inequality, the probability of having the time of a simulation greater than $2S$ is at most $0.5$. Also, the configuration at time $2S$ contains the seed structure because tiles in the seed structure do not fall off. Therefore if the assembly process is not finished at time $2S$, then starting from time $2S$, the expected time for the process to finish is less than $S$. Thus, the probability that a simulation takes more than $2kS$ time is at most $2^{-k}$, and the variance of the assembly time of a simulation is at most $\sum_i \frac{1}{2^i}(2iS)^2\ =\ 24S^2$. Thus, for any given $\epsilon$, if we run $48S^2\frac{1}{\epsilon}$ simulations, then the probability that the average of the assembly times of the simulations lands between $S-\epsilon$ and $S+\epsilon$ is at least $3/4$.
\end{proof}

\begin{theorem}
\label{thm:simulation}
Consider any tile system for which properties~\ref{property2}~through~\ref{property4} in Theorem~\ref{thm:time} hold but there is no assumption on whether tiles can attach only one by one or in parallel. Let $N$ be the number of positions outside the seed structure in the terminal assembly. If the assembly process of the tile system takes expected time $S$, then for any $\epsilon > 0$, the average of the assembly times over $O(n^4\frac{1}{\epsilon\ c_{\mbox{{\tiny min}}}^2})$ simulations of the assembly process will be between $S-\epsilon$ and $S+\epsilon$ with probability at least $3/4$.
\end{theorem}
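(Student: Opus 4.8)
The plan is to feed an a~priori upper bound on the expected assembly time $S$ into Lemma~\ref{lem:simulation}. That lemma already shows $48S^2/\epsilon$ simulations suffice, but this count depends on the unknown quantity $S$ we wish to estimate; the content of the theorem is to replace $S$ by a bound computable from $N$ and $c_{\min}$. The guarantee of Lemma~\ref{lem:simulation} rests on a Chebyshev bound whose error probability only decreases as the number of simulations grows, so it is enough to produce any $\hat S \ge S$ and run $48\hat S^2/\epsilon$ simulations. Hence the real task is to prove $S = O(N/c_{\min})$: then $48\hat S^2/\epsilon = O(N^2/(\epsilon c_{\min}^2))$, and writing $n$ for the linear size of the target shape so that $N = O(n^2)$, this becomes the stated $O(n^4/(\epsilon c_{\min}^2))$. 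Throughout, $k_f$ is a constant absorbed into the $O(\cdot)$ notation.

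To bound $S$, I would view the assembly under properties~\ref{property2} through~\ref{property4} as a continuous-time Markov process. For each non-seed position $p$ of the terminal assembly, let $a_p$ be the first time the matched glue strength at $p$ from already-filled neighbors reaches $\tau$ (so that the correct tile can attach), and let $\tau_p$ be the time $p$ is actually filled. Because only the correct tile can ever attach at $p$ (property~\ref{property2}) and never detaches (property~\ref{property3}), the memorylessness of the attachment clock makes the waiting time $X_p := \tau_p - a_p$ an exponential random variable of rate $k_f c_i \ge k_f c_{\min}$, where $T_i$ is the tile at $p$; thus $E[X_p] \le 1/(k_f c_{\min})$.

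The key structural step is the inequality $T \le \sum_p X_p$ for the total assembly time $T = \max_p \tau_p$. Since $p$ can only become attachable after the neighbors supplying its binding strength are in place, $a_p \le \max_q \tau_q$ taken over those input neighbors $q$, each of which is filled strictly before $p$. Unrolling this recurrence by repeatedly selecting a max-achieving predecessor produces a chain $P$ of distinct positions ending at $p$ with $\tau_p \le \sum_{q \in P} X_q$. As every chain is a subset of the non-seed positions and each $X_q \ge 0$, we get $\tau_p \le \sum_{\text{all } q} X_q$ for all $p$, hence $T \le \sum_q X_q$. Taking expectations and invoking linearity---no independence is required---yields $S = E[T] \le N/(k_f c_{\min})$, the desired bound. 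Substituting $\hat S = N/(k_f c_{\min})$ into Lemma~\ref{lem:simulation} then completes the proof.

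I expect the main obstacle to be this structural inequality, and in particular justifying the recurrence $a_p \le \max_q \tau_q$ together with its unrolling into an acyclic chain of length at most $N$. The subtlety is that a position may first reach strength $\tau$ through any of several subsets of its neighbors, so one must argue that the governing input set consists entirely of positions that strictly precede $p$ in the unique assembly order guaranteed by property~\ref{property4}; this is what makes the chains genuinely acyclic and bounded by $N$, and it is the only place where the uniqueness and monotonicity of the assembly are essential. Once this is secured, defining $X_p$ via memorylessness, applying linearity of expectation, and plugging the bound into Lemma~\ref{lem:simulation} are all routine.
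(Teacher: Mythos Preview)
Your proposal is correct and reaches the same conclusion, but by a different route than the paper. The paper's proof is a two-line appeal to an external result: it cites Goel~\etal~\cite{cgm04:optcounter} for the fact that with uniform concentration $c$ the expected assembly time is $\Theta(d/c)$, where $d$ is the longest path in the assembly order, and then observes that lowering any concentration can only slow the process down, giving $S=O(d/c_{\min})=O(n/c_{\min})$; plugging this into Lemma~\ref{lem:simulation} finishes. You instead give a self-contained argument: the telescoping chain identity $\tau_p=\sum_{q\in P}X_q$ together with nonnegativity yields $T\le\sum_q X_q$, and linearity of expectation (no independence needed) gives $S\le N/(k_f c_{\min})$. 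Your bound is nominally weaker than the paper's ($N$ in place of $d\le N$), but it is exactly what is needed to recover the stated $O(n^4/(\epsilon\,c_{\min}^2))$ under the interpretation $N=O(n^2)$, and it avoids both the external citation and the implicit coupling/monotonicity step. The ``obstacle'' you flag---that the predecessor chain is acyclic and stays within the non-seed positions---is indeed the only nonroutine point, and your sketch of why uniqueness of the terminal assembly and irreversibility handle it is adequate.
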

\begin{proof}
Goel et al.~\cite{cgm04:optcounter} showed that if all tiles have concentrations $c$, then the expected assembly time is $\Theta(d/c)$, where $d$ is the length of the longest path in the assembly order of the terminal assembly. Since reducing tile concentrations only slows down the assembly process, the expected assembly time $S$ is $O(d/c_{\min}) = O(n/c_{\min})$. The theorem follows from this upper bound on $S$ and Lemma~\ref{lem:simulation}.
\end{proof}

\section{Further Research}
\label{sec:conclusion}

In Section~\ref{sec:error}, we gave closed-form formulas to minimize the growth errors by varying the concentration of each tile. In Section~\ref{sec:simulation}, we found in simulations that facet errors are also an important factor that needs to be considered in lab implementations. At the theoretical level, it is open to find closed-form formulas or efficient algorithms to minimize the facet errors by varying the tile concentrations.

In Section~\ref{sec:time}, we gave closed-form formulas to minimize the expected assembly time for a feasible class of tile systems by varying the tile concentrations. For general tile systems, the best known algorithm  can compute an $O(\log n)$-approximation of the minimum expected assembly time~\cite{acghkmr02:opt}. It is of interest to determine whether one can compute the precise minimum expected assembly time or an estimate with a better approximation factor than $O(\log n)$. Given tile concentrations, we showed that simulations can accurately predict the expected assembly time with high probability. The time it takes to run the required simulations is polynomial in the size of the terminal assembly, not the tile system itself. It would be useful if one can approximate the expected assembly time just by analyzing the tile system and some succinct features of the terminal assembly (e.g., the number of times each tile appears outside the seed structure in the terminal assembly) in time polynomial in the size of the tile system.

\bibliographystyle{plain}
\bibliography{case}

\end{document}